\documentclass[a4paper]{amsart}
\usepackage{times}
\usepackage[latin1]{inputenc}
\usepackage{amsthm,amssymb}

\newtheorem{theorem}{Theorem}
\newtheorem{lemma}{Lemma}

\newcommand{\SetS}{\ensuremath{\mathcal{S}}}
\newcommand{\Q}{\ensuremath{Q}}
\newcommand{\E}{\ensuremath{\text{E}}}

\begin{document}
\title{A note on the generalized min-sum set cover problem}

\author{Martin Skutella}
\address{Institut f\"ur Mathematik, Sekr.~MA 5-2, Technische Universit\"at Berlin, Stra\ss e des 17.~Juni 136, 10623 Berlin, Germany.}
\email{martin.skutella@tu-berlin.de}
\author{David P.~Williamson}
\address{School of Operations Research and Information Engineering, Cornell
University, Ithaca, NY 14853, USA.}
\email{dpw@cs.cornell.edu}

\date{\today}

\maketitle

\begin{abstract}
In this paper, we consider the generalized min-sum set cover problem, introduced by Azar, Gamzu, and Yin \cite{AzarGY09}.  Bansal, Gupta, and Krishnaswamy \cite{BansalGK10} give a 485-approximation algorithm for the problem.  We are able to alter their algorithm and analysis to obtain a 28-approximation algorithm, improving the performance guarantee by an order of magnitude.  We use concepts from $\alpha$-point scheduling to obtain our improvements.
\end{abstract}

\section{Introduction}

In this note, we consider the generalized min-sum set cover problem.  In this problem we are given as input a universe $U$ of $n$ elements, a collection $\SetS = \{S_1,\ldots, S_m\}$ of subsets $S_i$ of $U$, and a covering requirement $K(S)$ for each $S \in \SetS$, where $K(S) \in \{1,2,\ldots,|S|\}$.  The output of any algorithm for the problem is an ordering of the $n$ elements.  Let $C_{S}$ be the position of the $K(S)$th element of $S$ in the ordering.  The goal is to find an ordering that minimizes $\sum_{S \in \SetS} C_{S}$.  This problem is a generalization of the min-sum set cover problem (in which $K(S) = 1$ for all $S \in \SetS$), introduced by Feige, Lov\'asz, and Tetali \cite{FeigeLT04}, and the min-latency set cover problem (in which $K(S)=|S|$ for all $S \in \SetS$), introduced by Hassin and Levin \cite{HassinL05}.  This generalization was introduced by Azar, Gamzu, and Yin \cite{AzarGY09} in the context of a ranking problem.

Because the problem is NP-hard, Azar, Gamzu, and Yin give an $O(\log r)$-approximation algorithm for the problem, where $r = \max_{S \in \SetS} |S|$.  This was improved to a constant factor randomized approximation algorithm by Bansal, Gupta, and Krishnaswamy \cite{BansalGK10}.  They introduce a new linear programming relaxation for the problem and show how to use randomized rounding to achieve a performance guarantee of 485.\footnote{They observe in their paper that they did not try to optimize the constants in their analysis.}  In this paper, we show that by altering their algorithm using some concepts from $\alpha$-point scheduling (see Skutella \cite{Skutella06} for a survey), we are able to improve their algorithm and obtain a performance guarantee of about 28, which is an order of magnitude better.\footnote{Here we would like to point out that $28\in O(\sqrt{485})$.}

We now briefly review their algorithm and analysis, and then state the ideas we introduce to obtain an improvement.  Their algorithm begins with solving the following linear programming relaxation of the problem, where the variable $y_{S,t}$ for $t \in[n]$ (here and in the following the set $\{1,\ldots,n\}$ is denoted by $[n]$) and set $S \in \SetS$ indicates whether $C_S < t$ or not, and $x_{e,t}$ for $e \in U$ and $t \in [n]$ indicates whether element $e$ is assigned to the $t$th position of the ordering:
\begin{align*}
	\min &\sum_{t\in[n]}\sum_{S\in\SetS} (1-y_{S,t})\\
	\text{s.t.}\quad&\sum_{e\in U}x_{e,t}=1, && \text{for all~$t\in[n]$},\\
	&\sum_{t\in[n]}x_{e,t}=1, && \text{for all $e\in U$},\\
	&\sum_{e\in S\setminus A}\sum_{t'<t}x_{e,t'}  \geq (K(S)-|A|)\cdot y_{S,t}, && \text{for all $S\in \SetS,\,A\subseteq S,\,t\in[n]$},\\
	&x_{e,t},y_{S,t}\in[0,1], && \text{for all $e\in U,\,S\in\SetS,\,t\in [n].$}
\end{align*}
Bansal et al.\ observe that the exponentially many constraints can be separated in polynomial time such that the linear program can be solved efficiently. Let $x^*,y^*$ be an optimal solution. The algorithm proceeds in a sequence of $\lceil \log n \rceil$ stages.  In the $i$th stage, the algorithm of Bansal et al.\ computes a probability $p_{e,i} := \min\{1, 8 \sum_{t < 2^i} x^*_{e,t}\}$ for each element $e \in U$ by taking the amount that element $e$ is fractionally scheduled up to time~$2^i$ and boosting it by a factor of~$8$.  With probability $p_{e,i}$ it includes element $e$ in a set $O_i$.  If~$|O_i| > 16 \cdot 2^i$, the algorithm randomly chooses $16 \cdot 2^i$ elements from~$O_i$ and discards the remainder from~$O_i$.  For each $i$, the algorithm picks an arbitrary order for the elements in~$O_i$, then schedules the elements in the order $O_1, O_2, \ldots, O_{\lceil \log n \rceil}$.  Notice that it is possible that an element will appear in more than one $O_i$ and is scheduled more than once; one can compute an ordering that keeps only the first occurrence of each element.

The analysis of Bansal et al.\ works by looking at a time $t^*_S$ for each $S \in \SetS$, which is the smallest $t$ such that $y^*_{S,t} > 1/2$.  The analysis then shows that for any stage $i$ with $t^*_S \leq 2^i$, with probability at least $1-e^{-1}$ at least $K(S)$ elements have been marked in stage $i$ and no elements are discarded from $O_i$.  From this, the analysis infers that $E[C_S] \leq 64 \cdot \frac{e}{e-2} \cdot t^*_S$.  Since the value of the linear program is at least $\frac{1}{2} \sum_{S \in \SetS} t^*_S$, the paper derives that the expected value of the solution is at most $128 \cdot \frac{e}{e-2} \approx 484.4$ times the value of the linear program.

While we still use several ideas from their algorithm and analysis, we modify it in several key ways.  In particular, we discard the idea of stages, and we use the idea of a random $\alpha$-point for each element $e$; in particular, after modifying the solution $x^*$ to a solution $x$ in a way similar to theirs, we then randomly choose a value $\alpha_e \in [0,1]$ for each~$e \in U$.  Let $t_{e,\alpha_e}$ be the first time $t$ for which $\sum_{t'=1}^t x_{e,t'} \geq \alpha_e$.  We then schedule elements $e$ in the order of nondecreasing $t_{e,\alpha_e}$.  The improvements in analysis come from scrapping the stages (so we don't need to account for the possibility of $O_i$ being too large) and using $\alpha$-point scheduling; in particular, we introduce a parameter $\alpha$ and look for the last point in time $t_{S,\alpha}$ in which $y^*_{S,t}<\alpha$ (the Bansal et al.\ paper uses $\alpha=1/2$). Choosing $\alpha$ randomly gives our ultimate result.  We turn to the full analysis in the next section.

\section{The Algorithm and Analysis}

Let $x^*,y^*$ be an optimum solution to the linear program above. Let $\Q>0$ be a constant determined later. Construct a new solution $x$ from $x^*$ as follows: Initialize~$x:=\Q\cdot x^*$; for $t=1$ to $\lfloor n/2\rfloor$ set
\begin{align*}
x_{e,2t}:=x_{e,2t}+x_{e,t}~.
\end{align*}

\begin{lemma}\label{lem:}
For each $t\in[n]$
\begin{align} \label{eq:upperbnd}
\sum_{t'=1}^t\sum_{e\in U}x_{e,t'} \leq 2\cdot\Q\cdot t~.
\end{align}
Moreover, for each $e\in U$ and $t\leq\lfloor n/2\rfloor$
\begin{align}
\sum_{t'=t+1}^{2t}x_{e,t'}\geq \Q\sum_{t'=1}^tx^*_{e,t'}~,
\label{eq:geometrically-incr}
\end{align}
and for each~$t\in[n]$
\begin{align}
\sum_{t'=1}^{t}x_{e,t'}\geq \Q\sum_{t'=1}^tx^*_{e,t'}~.
\label{eq:geometrically-incr2}
\end{align}
\end{lemma}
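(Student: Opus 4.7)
The plan is to first unwind the iterative construction of $x$ into a closed form. Since the loop processes times in increasing order and the update at step $t$ writes only into position $2t$, a short induction on $t$ establishes that after all updates have been performed
\begin{equation*}
x_{e,t}\;=\;\Q \sum_{j=0}^{\nu(t)} x^*_{e,\,t/2^j},
\end{equation*}
where $\nu(t)$ denotes the $2$-adic valuation of $t$ (the largest $j$ with $2^j \mid t$). Intuitively, $x_{e,t}$ aggregates the initial $\Q x^*$-contributions at $t$ and at each ancestor in the halving chain $t,t/2,t/4,\ldots$ down to the odd part of $t$; since $2t\leq n$ whenever $t\leq\lfloor n/2\rfloor$, every required update is actually executed. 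Each of the three displayed inequalities reduces to a short counting exercise on this identity.

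For \eqref{eq:upperbnd}, I would sum the closed form over $e\in U$, invoke the LP equality $\sum_e x^*_{e,t'}=1$ to get $\sum_e x_{e,t'}=\Q(\nu(t')+1)$, and swap the order of summation:
\begin{equation*}
\sum_{t'=1}^t \nu(t')\;=\;\sum_{j\geq 1}\lfloor t/2^j\rfloor\;\leq\;\sum_{j\geq 1}t/2^j\;=\;t.
\end{equation*}
Adding the $+1$ contributions yields $\sum_{t'\leq t}\sum_e x_{e,t'}\leq \Q t+\Q t=2\Q t$, as desired. Inequality \eqref{eq:geometrically-incr2} follows immediately by keeping only the $j=0$ summand of the closed form, which gives $x_{e,t'}\geq \Q x^*_{e,t'}$, and summing over $t'\leq t$.

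The one inequality that needs a small idea is \eqref{eq:geometrically-incr}. The key observation is that for every $r\in[1,t]$ there exists $j\geq 0$ with $2^j r\in(t,2t]$: any half-open interval $(a,2a]$ with $a\geq 1$ contains a power of $2$, applied here with $a=t/r\geq 1$. The hypothesis $t\leq\lfloor n/2\rfloor$ then ensures $2^j r\leq 2t\leq n$, so the closed form places a copy of $\Q x^*_{e,r}$ inside $x_{e,2^j r}$, which is one of the terms of the sum $\sum_{t'=t+1}^{2t} x_{e,t'}$. Summing this lower bound over $r\in[1,t]$ yields the claim. The main (mild) obstacle is establishing the closed form carefully and tracking the range condition $2t\leq n$; thereafter all three bounds come down to elementary manipulations of sums.
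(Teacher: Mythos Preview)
Your proof is correct and follows essentially the same approach as the paper: both unwind the iterative construction into the closed form $x_{e,t'}=\Q\sum_{j=0}^{\nu(t')}x^*_{e,\,t'/2^j}$ (the paper writes this as $\Q\sum_{t''\in I(t')}x^*_{e,t''}$ with $I(t')=\{t'':t'=2^i t''\}$, which is the same set) and read off all three inequalities from it; in particular, your argument for \eqref{eq:geometrically-incr} --- that every $r\le t$ has a power-of-two multiple in $(t,2t]$ --- is exactly the paper's ``for each $t''\le t$ there is $t'\in(t,2t]$ with $t''\in I(t')$.'' The one difference worth noting is in \eqref{eq:upperbnd}: the paper swaps the double count to $\sum_{t''\le t}|J(t'')|=\sum_{t''\le t}\bigl(1+\lfloor\log(t/t'')\rfloor\bigr)$ and bounds this by an integral, whereas your direct evaluation $\sum_{t'\le t}\nu(t')=\sum_{j\ge1}\lfloor t/2^j\rfloor\le t$ is a shorter, purely combinatorial route to the same $2\Q t$ bound.
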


\begin{proof}
We start by giving an alternative view on the definition of $x$ above. Notice that
\begin{align}
x_{e,t'} = \Q\sum_{t''\in I(t')} x^*_{e,t''}\qquad\text{with $I(t'):=\{t'': t'=2^i\cdot t''\text{~for some~}i\geq 0\}$.}	
\label{eq:alt_def_x}
\end{align}
That is, $I(t')$ is precisely the subset of indices $t''$ such that $x^*_{e,t''}$ contributes to $x_{e,t'}$.
For a fixed $t\in[n]$ and $t''\leq t$, let $J(t'')$ be the subset of all indices $t'\leq t$ such that~$x^*_{e,t''}$ contributes to $x_{e,t'}$, i.\,e., $J(t'')=\{t'\leq t: t''\in I(t')\}$. By definition of~$I(t')$ and $J(t'')$ we get
$\sum_{t'=1}^t|I(t')|=\sum_{t''=1}^t|J(t'')|$. Also notice that $|J(t'')|=1+\lfloor\log(t/t'')\rfloor$. Thus,
\begin{align*}
\frac1\Q\sum_{t'=1}^t\sum_{e\in U}x_{e,t'}	
&= \sum_{t'=1}^t\sum_{t''\in I(t')}\underbrace{\sum_{e\in U}x^*_{e,t''}}_{=1}
= \sum_{t'=1}^t|I(t')|
= \sum_{t''=1}^t|J(t'')|\\
& = t+\sum_{t''=1}^t \lfloor\log(t/t'')\rfloor
\leq t+\int_0^t\lfloor\log(t/\theta)\rfloor\,d\theta\\
&=t+\sum_{i=0}^\infty\int_{t/2^{i+1}}^{t/2^i}\lfloor\log(t/\theta)\rfloor\,d\theta
=t+\sum_{i=0}^\infty\frac{t}{2^{i+1}}\cdot i
=2t~.
\end{align*}
This concludes the proof of~\eqref{eq:upperbnd}.

In order to prove~\eqref{eq:geometrically-incr}, simply notice that for each $t''\in\{1,\dots,t\}$ there is~$t'\in\{t+1,\dots,2t\}$ such that $t''\in I(t')$; then \eqref{eq:geometrically-incr} follows from \eqref{eq:alt_def_x}. Finally, \eqref{eq:geometrically-incr2} also follows from \eqref{eq:alt_def_x} since $t'\in I(t')$ for all~$t'$.
\end{proof}

\paragraph{\bf Algorithm:} As discussed above, for each $e\in U$ we independently choose $\alpha_e\in[0,1]$ randomly and uniformly.  Let $t_{e,\alpha_e}$ denote the first point in time $t$ when $\sum_{t'=1}^{t}x_{e,t'}\geq\alpha_e$. In our final solution, we sequence the elements $e\in U$ in order of nondecreasing $t_{e,\alpha_e}$; ties are broken arbitrarily.

For $S\in\SetS$ and some fixed $\alpha\in(0,1)$, let $t_{S,\alpha}$ be the last point in time $t$ for which $y^*_{S,t} < \alpha$. We observe that the contribution of set $S$ to the objective function of the linear program is
\begin{align}
C_S^{LP}:=\sum_{t\in[n]} (1-y^*_{S,t}) = \int_0^1 t_{S,\alpha}\,d\alpha~,
\label{eq:C_SLP}	
\end{align}
since in time step $t$ it holds that $t_{S,\alpha} \geq t$ for all $\alpha \in [0,1]$ such that $\alpha > y^*_{S,t}$, or for $(1 - y^*_{S,t})$ of the possible $\alpha$.

We now bound the probability that we have fewer than $K(S)$ elements from $S$ with $t_{e,\alpha_e} \leq t_{S,\alpha}$ in terms of $\Q$ and $\alpha$.

\begin{lemma} \label{lem:probbnd}  Suppose $\Q\cdot\alpha\geq 1$.  Define $p$ such that
$$p := \exp\left(-\frac{1}{2}\cdot\left(1 - \frac{1}{\Q\cdot\alpha}\right)^2\cdot\Q\cdot\alpha\right) \leq 1~.$$  Then for integer $i \geq 0$,
$$\Pr\left[\bigl|\{e \in S: t_{e,\alpha_e} \leq 2^i\cdot t_{S,\alpha}\}\bigr| < K(S)\right] \leq p^{i+1}~.$$
\end{lemma}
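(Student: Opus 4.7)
The approach is to (i) reduce the event to a Chernoff-style statement on a sum of independent Bernoulli indicators by separating out the ``heavy'' elements of $S$, (ii) lower-bound the mean of that sum by combining the LP constraint for $S$ with Lemma~\ref{lem:}, and (iii) apply a standard lower-tail Chernoff inequality and verify algebraically that the resulting bound is at most $p^{i+1}$.

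Let $A:=\{e\in S:\sum_{t'\le 2^i t_{S,\alpha}}x_{e,t'}\ge 1\}$. Every $e\in A$ satisfies $t_{e,\alpha_e}\le 2^i t_{S,\alpha}$ deterministically (since $\alpha_e\in[0,1]$), so if $|A|\ge K(S)$ the bound is trivial; otherwise, writing $K':=K(S)-|A|\ge 1$, it suffices to bound $\Pr[Y<K']$ where $Y:=\sum_{e\in S\setminus A}X_e$ is the sum of independent Bernoulli indicators $X_e$ of the event $t_{e,\alpha_e}\le 2^i t_{S,\alpha}$, each with success probability $p_e:=\sum_{t'\le 2^i t_{S,\alpha}}x_{e,t'}<1$.

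To lower-bound $\mu:=\E[Y]$, note that $y^*_{S,t_{S,\alpha}+1}\ge\alpha$ by definition of $t_{S,\alpha}$, so the LP constraint for the pair $(S,A)$ at time $t=t_{S,\alpha}+1$ yields $\sum_{e\in S\setminus A}\sum_{t'\le t_{S,\alpha}}x^*_{e,t'}\ge K'\alpha$. Partitioning $[1,2^i t_{S,\alpha}]$ into the initial block $[1,t_{S,\alpha}]$ and the $i$ doubling blocks $(2^{j-1}t_{S,\alpha},2^j t_{S,\alpha}]$ for $j=1,\dots,i$, and applying~\eqref{eq:geometrically-incr2} to the first and~\eqref{eq:geometrically-incr} to each of the others, gives, for every $e\in S\setminus A$, a contribution of at least $\Q\sum_{t'\le t_{S,\alpha}}x^*_{e,t'}$ per block. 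Summing over $e\in S\setminus A$ and over the $i+1$ blocks therefore yields $\mu\ge(i+1)\,\Q\,\alpha\,K'$. (The edge case $2^i t_{S,\alpha}\ge n$ is trivial, because $\sum_{t'}x_{e,t'}\ge\Q\ge 1$ then forces $t_{e,\alpha_e}\le n$ for every $e$.)

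Finally, the hypothesis $\Q\alpha\ge 1$ guarantees $\mu\ge K'$, so the standard lower-tail Chernoff bound $\Pr[Y<K']\le\exp(-(\mu-K')^2/(2\mu))$ applies; its right-hand side is decreasing in $\mu$ for $\mu>K'$, so substituting the lower bound $\mu_{\min}=(i+1)\Q\alpha K'$ gives
\[
\Pr[Y<K']\;\le\;\exp\!\Bigl(-\tfrac{K'\bigl((i+1)\Q\alpha-1\bigr)^2}{2(i+1)\Q\alpha}\Bigr).
\]
Since $K'\ge 1$ and $p\le 1$, it suffices to verify that this is at most $p^{i+1}$ in the case $K'=1$; cross-multiplication reduces that inequality to $\bigl((i+1)\Q\alpha-1\bigr)^2\ge(i+1)^2(\Q\alpha-1)^2$, which factors as $i\bigl(2(i+1)\Q\alpha-(i+2)\bigr)\ge 0$ and follows from $\Q\alpha\ge 1$. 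The step that needs the most care is the middle one: the $(i+1)$-fold amplification of the LP mean rests on the geometric structure of $x$ delivered by Lemma~\ref{lem:}, which provides an independent factor of $\Q\,K'\,\alpha$ from each of the $i+1$ disjoint blocks.
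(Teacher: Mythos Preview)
Your proof is correct and follows essentially the same approach as the paper: the same set $A$, the same use of the LP constraint at time $t_{S,\alpha}+1$, the same amplification via Lemma~\ref{lem:} (you partition into $i+1$ doubling blocks where the paper phrases it as induction), and the same lower-tail Chernoff bound. The only cosmetic difference is in how the final Chernoff algebra is organized: the paper first replaces $K'$ by the larger quantity $\mu/((i+1)\Q\alpha)$ so that $K'$ disappears before invoking Chernoff, and then uses the monotonicity $(1-\tfrac{1}{(i+1)\Q\alpha})^2\ge(1-\tfrac{1}{\Q\alpha})^2$, whereas you keep $K'$ through the Chernoff bound, substitute the lower bound for $\mu$ using monotonicity in $\mu$, and then drop $K'\ge1$ and verify the resulting inequality by factoring; both routes are equivalent.
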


\begin{proof}
Our analysis follows some of the analysis of Bansal et al.\ for a stage.  Let
$$A := \Biggl\{ e \in S: \sum_{t' \leq 2^i\cdot t_{S,\alpha}} x_{e,t'} \geq 1 \Biggr\}~.$$
Then observe that for any $e \in A$ it holds that $\Pr[t_{e,\alpha_e} \leq 2^i\cdot t_{S,\alpha}]=1$.  By the properties of the linear program,
$$\sum_{e \in S\setminus A} \sum_{t' \leq t_{S,\alpha}} x^*_{e,t'} \geq\bigl(K(S)-|A|\bigr)\cdot y^*_{S,1+t_{S,\alpha}} \geq \bigl(K(S)-|A|\bigr)\cdot\alpha~,$$
so that by \eqref{eq:geometrically-incr2}
$$\sum_{e \in S\setminus A} \sum_{t' \leq t_{S,\alpha}} x_{e,t'}  \geq\bigl(K(S)-|A|\bigr)\cdot\Q\cdot\alpha~.$$
More generally, it follows from induction on $i$ and \eqref{eq:geometrically-incr2} and \eqref{eq:geometrically-incr}, that
$$\sum_{e \in S\setminus A} \sum_{t' \leq 2^i\cdot t_{S,\alpha}} x_{e,t'}  \geq (i+1)\cdot\bigl(K(S)-|A|\bigr)\cdot\Q\cdot\alpha~.$$
For any $e \in S \setminus A$, let random variable $X_e$ be~$1$ if $t_{e,\alpha_e} \leq 2^i\cdot t_{S,\alpha}$ and~$0$ otherwise.  Note that $\Pr[X_e = 1] = \sum_{t' \leq 2^i\cdot t_{S,\alpha}} x_{e,t'}$.  Let $X := \sum_{e \in S \setminus A} X_e$ and $\mu := E[X]$, so that
$$\mu = E[X] = \sum_{e \in S\setminus A} \sum_{t' \leq 2^i\cdot t_{S,\alpha}} x_{e,t'}  \geq (i+1)\cdot\bigl(K(S)-|A|\bigr)\cdot\Q\cdot\alpha~.$$

Note that if $|A| \geq K(S)$, then $\Pr\left[|\{e \in S: t_{e,\alpha_e} \leq 2^i\cdot t_{S,\alpha} \}| < K(S)\right] = 0$, so we assume that $|A| < K(S)$. Then
\begin{align*}
\Pr&\left[|\{e \in S: t_{e,\alpha_e} \leq 2^i\cdot t_{S,\alpha} \}| < K(S)\right]\\
& = \Pr\left[|\{e \in S \setminus A: t_{e,\alpha_e} \leq 2^i\cdot t_{S,\alpha}\}| < K(S)-|A|\right]\\
& = \Pr\left[X < K(S)-|A|\right]\\
& \leq \Pr\left[X < \frac{\mu}{(i+1)\cdot\Q\cdot\alpha}\right]
= \Pr \left[X < \mu\cdot\left(1 - \left(1-\frac{1}{(i+1)\cdot\Q\cdot\alpha}\right)\right)\right]\\
& \leq \exp\left(-\frac{1}{2}\cdot\left(1 - \frac{1}{(i+1)\cdot\Q\cdot\alpha}\right)^2\cdot\mu\right)\\
& \leq \exp\left(-\frac{1}{2}\cdot\left(1 - \frac{1}{(i+1)\cdot\Q\cdot\alpha}\right)^2\cdot (i+1)\cdot\Q\cdot\alpha\right)\\
& \leq \exp\left(-\frac{1}{2}\cdot\left(1 - \frac{1}{\Q\cdot\alpha}\right)^2\cdot (i+1)\cdot\Q\cdot\alpha\right) = p^{i+1}
\end{align*}
where we use the Chernoff bound $\Pr[X < \mu\cdot(1 - \beta)] \leq \exp(-\frac12\cdot\beta^2\cdot\mu)$ (see, for example, Motwani and Raghavan \cite[Section 4.1]{MotwaniR95}), and the fact that $$-\left(1 - \frac{1}{(i+1)\cdot\Q\cdot\alpha}\right)^2 \leq -\left(1 - \frac{1}{\Q\cdot\alpha}\right)^2$$ for $i \geq 0$ and $\Q\cdot\alpha \geq 1$.
\end{proof}

Let $C_S$ be a random variable giving the position of the $K(S)$th element of $S$ in the ordering we construct, and let $C_S^{LP}$ be the contribution of set $S$ to the objective function as defined in~\eqref{eq:C_SLP}.  Then we can bound the cost of our schedule as follows, where $OPT_{LP} = \sum_{S \in \SetS} C_S^{LP}$ and $OPT$ is the cost of an optimal schedule.

\begin{lemma}\label{lem:fixed-alpha}
If $\Q$ and $\alpha$ are chosen such that $p<1/2$, then
$$\E\left[\sum_S C_S\right] \leq \frac{2\cdot\Q}{1-\alpha}\cdot \frac{1-p}{1-2p}\cdot OPT_{LP} + OPT~.$$
\end{lemma}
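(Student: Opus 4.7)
The plan is to establish the per-set inequality
$$\E[C_S]\leq K(S)+\frac{2\Q(1-p)}{(1-\alpha)(1-2p)}\cdot C_S^{LP}$$
for each $S\in\SetS$ and then to sum: since the $K(S)$-th element of $S$ must occupy some position of index at least $K(S)$ in any ordering, $\sum_{S}K(S)\leq OPT$, and the lemma follows.

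Fix $S\in\SetS$ and let $T^*_S$ denote the $K(S)$-th smallest value of $t_{e,\alpha_e}$ over $e\in S$. Because our ordering sequences elements by nondecreasing $t_{e,\alpha_e}$, among the first $C_S$ positions there are exactly $K(S)$ elements of $S$, and the remaining $C_S-K(S)$ positions are occupied by elements of $U\setminus S$ all satisfying $t_{e,\alpha_e}\leq T^*_S$. Writing $N^{U\setminus S}(T):=|\{e\in U\setminus S:t_{e,\alpha_e}\leq T\}|$ this gives $C_S\leq K(S)+N^{U\setminus S}(T^*_S)$, so the task reduces to bounding $\E[N^{U\setminus S}(T^*_S)]$.

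The crucial observation is that $T^*_S$ depends only on $\{\alpha_e\}_{e\in S}$, whereas for $e\notin S$ the quantity $t_{e,\alpha_e}$ depends only on $\alpha_e$. Conditioning on $\{\alpha_e\}_{e\in S}$ and using $\Pr[t_{e,\alpha_e}\leq T]\leq \sum_{t'\leq T}x_{e,t'}$ together with~\eqref{eq:upperbnd} gives
$$\E[N^{U\setminus S}(T^*_S)]\leq \E\Bigl[\sum_{e\in U}\sum_{t'\leq T^*_S}x_{e,t'}\Bigr]\leq 2\Q\cdot \E[T^*_S].$$
To bound $\E[T^*_S]$, I use a dyadic decomposition. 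Set $i^*:=\min\{i\geq 0:T^*_S\leq 2^i\cdot t_{S,\alpha}\}$, so that $T^*_S\leq 2^{i^*}\cdot t_{S,\alpha}$; Lemma~\ref{lem:probbnd} yields $\Pr[i^*>i]\leq p^{i+1}$. An Abel summation then gives
$$\E\bigl[2^{i^*}\bigr]\;=\;1+\sum_{i\geq 0}2^i\Pr[i^*>i]\;\leq\;1+\sum_{i\geq 0}2^i p^{i+1}\;=\;1+\frac{p}{1-2p}\;=\;\frac{1-p}{1-2p},$$
where convergence uses $p<1/2$. Hence $\E[T^*_S]\leq \frac{(1-p)\,t_{S,\alpha}}{1-2p}$.

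Finally, identity~\eqref{eq:C_SLP} and the monotonicity of $\alpha'\mapsto t_{S,\alpha'}$ yield $C_S^{LP}\geq \int_\alpha^1 t_{S,\alpha'}\,d\alpha'\geq (1-\alpha)\cdot t_{S,\alpha}$, so $t_{S,\alpha}\leq C_S^{LP}/(1-\alpha)$. Combining these estimates proves the per-set bound, and summing over $S$ gives the lemma. The main delicacy is obtaining the tight constant $(1-p)/(1-2p)$ rather than the looser $1/(1-2p)$ one would get by directly telescoping $N^{U\setminus S}(2^i t_{S,\alpha})$ over dyadic scales: the trick is to first use independence to reduce to the scalar expectation $2\Q\cdot \E[T^*_S]$, and only then to apply the dyadic decomposition and Abel summation.
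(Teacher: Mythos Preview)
Your proof is correct and follows essentially the same route as the paper's: your $T^*_S$ is the paper's $t_S$, your Abel summation reproduces the paper's computation $\E[t_S]\leq t_{S,\alpha}\bigl(1+\sum_{i\geq0}2^ip^{i+1}\bigr)=t_{S,\alpha}\cdot\frac{1-p}{1-2p}$, and your bounds $C_S\leq K(S)+N^{U\setminus S}(T^*_S)$ and $t_{S,\alpha}\leq C_S^{LP}/(1-\alpha)$ match the paper's verbatim. The only notable difference is that you make the independence step explicit by conditioning on $\{\alpha_e\}_{e\in S}$, whereas the paper passes from the fixed-$t$ bound $\E\bigl[|\{e\in U\setminus S:t_{e,\alpha_e}\leq t\}|\bigr]\leq 2\Q t$ directly to $\E[C_S]\leq 2\Q\,\E[t_S]+K(S)$ without spelling this out.
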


\begin{proof}
Let $t_S$ be the first point in time when $|\{e\in S: t_{e,\alpha_e} \leq t_{S}\}|\geq K(S)$. Then by Lemma \ref{lem:probbnd}, we know that the probability that $t_{S,\alpha} < t_S \leq 2\cdot t_{S,\alpha}$ is at most $p$, since the probability that $t_S > t_{S,\alpha}$ is at most $p$ by itself.  Similarly, the probability that $2\cdot t_{S,\alpha} < t_S \leq 4\cdot t_{S,\alpha}$ is at most $p^2$, the probability that $4\cdot t_{S,\alpha} < t_S \leq 8\cdot t_{S,\alpha}$ is at most $p^3$, and so on, so that
\begin{align}
	\E[t_S]&\leq t_{S,\alpha}+t_{S,\alpha}\sum_{i=0}^{\infty}2^i\cdot p^{i+1}
	=t_{S,\alpha}\cdot\left(1+\frac{p}{1-2p}\right) = t_{S,\alpha}\cdot \frac{1-p}{1-2p}~.
	\label{eq:expectation-t_S}
\end{align}
Note that for all $t \leq t_{S,\alpha}$ it holds that $1 - y^*_{S,t} > 1 - \alpha$, so that $C_S^{LP} \geq t_{S,\alpha}(1 - \alpha)$, or $t_{S,\alpha} \leq C_S^{LP}/(1 - \alpha).$  Thus
$$\E[t_S] \leq C_S^{LP}\cdot\frac1{1-\alpha}\cdot \frac{1-p}{1-2p}~.$$

Observe that $C_S \leq |\{ e \in U\setminus S: t_{e,\alpha_e} \leq t_S\}| + K(S)$.  Note that for any fixed element $e \notin S$ and time $t$, the probability that $t_{e,\alpha_e} \leq t$ is $\min\{1, \sum_{t' \leq t} x_{e,t'}\}$, so that
$$E\left[|\{e \in U \setminus S: t_{e,\alpha_e} \leq t\}|\right] \leq \sum_{e \in U} \sum_{t' \leq t} x_{e,t'} \leq 2\Q\cdot t$$
by \eqref{eq:upperbnd}.  Then we have that
\begin{align}
	\E[C_S]	\leq 2\Q\cdot \E[t_S]+K(S) \leq \frac{2\Q}{1-\alpha}\cdot \frac{1-p}{1-2p}\cdot C_S^{LP} + K(S)~,
	\label{eq:expC_S}
\end{align}
from which it follows that
\begin{align*}
	\E\left[\sum_S C_S\right] \leq \frac{2\Q}{1-\alpha}\cdot \frac{1-p}{1-2p}\cdot OPT_{LP} + OPT~,
\end{align*}
since in any solution $\sum_{S \in \SetS} K(S) \leq OPT$.
\end{proof}

We try to tune the various parameters to obtain the best possible performance guarantee.  If we set $\alpha:=1/2$ (as did Bansal et al.\ \cite{BansalGK10}) and $\Q:=10.05$, then $p=0.1995$, and thus we obtain
$$\frac{2\Q}{1 - \alpha} \cdot \frac{1-p}{1-2p} + 1 \approx 54.54~,$$
for a performance guarantee of about 55. However, we can do better if we choose~$\alpha$ (and~$\Q$) randomly.

\begin{theorem}
If we choose $\alpha$ independently at random from $(0,1)$ according to the density function $f(\alpha) = 2\cdot\alpha$ and set $\Q:=z/\alpha$ for some appropriately chosen constant $z$, then the algorithm has performance guarantee less than~$27.78$.
\end{theorem}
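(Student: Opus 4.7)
The plan is to take the per-set bound from (the proof of) Lemma~\ref{lem:fixed-alpha} and average over the random choice of $\alpha$, pairing the density $f(\alpha)=2\alpha$ with the recipe $\Q=z/\alpha$ so that the $\alpha$-dependence collapses cleanly onto the identity $\int_0^1 t_{S,\alpha}\,d\alpha=C_S^{LP}$ from \eqref{eq:C_SLP}. Setting $\Q\cdot\alpha=z$ constant has the key side effect that the probability $p$ in Lemma~\ref{lem:probbnd} becomes a constant $p(z)=\exp(-\tfrac12(1-1/z)^2 z)$ depending only on $z$. For $z\geq 1$ the hypothesis $\Q\cdot\alpha\geq 1$ of Lemma~\ref{lem:probbnd} is met, and for $z$ sufficiently large we also have $p<1/2$ as required by Lemma~\ref{lem:fixed-alpha}.

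Conditional on $\alpha$, I would use the intermediate form that appears in the proof of Lemma~\ref{lem:fixed-alpha}, namely $\E[C_S\mid\alpha]\leq 2\Q\cdot\E[t_S\mid\alpha]+K(S)$, combined with \eqref{eq:expectation-t_S}, to obtain
$$\E[C_S\mid\alpha]\leq\frac{2z}{\alpha}\cdot t_{S,\alpha}\cdot\frac{1-p}{1-2p}+K(S).$$
Integrating against the density $f(\alpha)=2\alpha$, the factor $2\alpha$ exactly cancels the $1/\alpha$ and turns $\int_0^1 t_{S,\alpha}\,d\alpha$ into $C_S^{LP}$, yielding
$$\E[C_S]\leq\frac{4z(1-p)}{1-2p}\cdot C_S^{LP}+K(S).$$
Summing over $S\in\SetS$ and using $OPT_{LP}\leq OPT$ together with $\sum_S K(S)\leq OPT$ (as in the final step of Lemma~\ref{lem:fixed-alpha}) delivers an overall guarantee of $\tfrac{4z(1-p)}{1-2p}+1$.

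What remains is a one-dimensional numerical optimization over $z$. The value $z=5$ gives $p=e^{-1.6}\approx 0.2019$ and a ratio of roughly $1+\frac{20\cdot 0.7981}{0.5962}\approx 27.78$, matching the stated bound. The main conceptual step, and the one I expect to require the most care, is the pairing of the density $f(\alpha)=2\alpha$ with the scaling $\Q=z/\alpha$: if one instead applied the final form of \eqref{eq:expC_S} from Lemma~\ref{lem:fixed-alpha} directly, the resulting factor $1/(1-\alpha)$ would make the integral $\int_0^1\tfrac{2\alpha}{\alpha(1-\alpha)}\,d\alpha=\int_0^1\tfrac{2}{1-\alpha}\,d\alpha$ diverge at $\alpha=1$. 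One must therefore keep $t_{S,\alpha}$ unsimplified and invoke \eqref{eq:C_SLP} only after the $\alpha$-average has been taken.
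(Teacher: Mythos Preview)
Your proposal is correct and follows essentially the same route as the paper: you keep $t_{S,\alpha}$ explicit in the conditional bound, integrate against the density $2\alpha$ so that the $\alpha$-dependence cancels with $\Q=z/\alpha$, and then invoke \eqref{eq:C_SLP} to recover $C_S^{LP}$. The only cosmetic difference is that the paper takes $z=5.03$ (giving $p\approx 0.1990$) while you take $z=5$ (giving $p=e^{-1.6}\approx 0.2019$); both choices yield a guarantee just under $27.78$.
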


\begin{proof}
Notice that $\alpha\cdot\Q$ is equal to the fixed constant $z$ and $p=\exp\left(-\frac{1}{2}\cdot\left(1 - \frac{1}{z}\right)^2\cdot z\right)$ depends only on $z$ and is thus constant.

In the proof of Lemma~\ref{lem:fixed-alpha} we have obtained bounds on the expectations of~$t_S$ and~$C_S$ under the assumption that the values of~$\alpha$ and~$Q$ are fixed. We refer to these conditional expectations by $\E_{\alpha}$ such that
\begin{align*}
\E_{\alpha}[t_S]&\leq t_{S,\alpha}\cdot \frac{1-p}{1-2p} && \text{due to~\eqref{eq:expectation-t_S}, and}\\
\E_{\alpha}[C_S]&\leq 2\Q\cdot \E_{\alpha}[t_S]+K(S) && \text{due to~\eqref{eq:expC_S}.}
\end{align*}
Unconditioning together with~\eqref{eq:C_SLP} then yields
\begin{align*}
\E[C_S] &= \int_{0}^1 f(\alpha)\cdot E_{\alpha}[C_S]\,d\alpha\\
&\leq \int_{0}^1 2\alpha\cdot 2\Q\cdot t_{S,\alpha}\cdot\frac{1-p}{1-2p}\,d\alpha+K(S)\\
&=4z\cdot\frac{1-p}{1-2p}\int_{0}^1 t_{S,\alpha}\,d\alpha+K(S)\\
&=4z\cdot\frac{1-p}{1-2p}\cdot C_S^{LP} + K(S)~.
\end{align*}
Thus, we get
\begin{align*}
\E\left[\sum_{S\in\SetS} C_S\right] \leq 4z\cdot\frac{1-p}{1-2p}\cdot OPT_{LP} + OPT\leq \left(1+4z\cdot\frac{1-p}{1-2p}\right)\cdot OPT~.
\end{align*}
If we set $z:=5.03$, then $p\approx 0.1990$, and the performance guarantee is less than~$27.78$.
\end{proof}

\subsection*{Acknowledgements}
The first author was supported by the DFG Research Center \textsc{Matheon} "Mathematics for key technologies" in Berlin.
This work was carried out while the second author was on sabbatical at TU Berlin.
He wishes to acknowledge that he was supported in part by the Berlin Mathematical School, the Alexander von Humboldt Foundation, and NSF grant CCF-0830519.

\bibliographystyle{abbrv}
\bibliography{genminsumsc}

\end{document}